\documentclass[reprint,twocolumn,aps,amsmath,amssymb,pra,superscriptaddress,floatfix,tightenlines,nofootinbib,nobibnotes]{revtex4-2}

\usepackage{graphicx}
\usepackage[colorlinks,linkcolor=blue,urlcolor=blue, citecolor=blue]{hyperref}
\usepackage{xurl}
\usepackage{booktabs}
\usepackage{subcaption}
\usepackage{comment}
\usepackage{nicematrix}
\usepackage{tikz}
\usetikzlibrary{positioning, shapes.multipart}

\usepackage{dsfont}
\usepackage{physics}
\usepackage{nicefrac}
\usepackage{mathtools}
\usepackage{amsmath, amssymb, amsfonts, amsthm}
\usepackage{thmtools, thm-restate}

\usepackage{algorithmic}
\usepackage{algorithm}

\newtheorem{theorem}{Theorem}
\newtheorem*{theorem*}{Theorem}

\newcommand{\argmin}{\operatorname*{argmin}}

\DeclareMathOperator{\polylog}{polylog}
\DeclareMathOperator{\poly}{poly}

\allowdisplaybreaks

\begin{document}

\title{Overflow-Safe Polylog-Time Parallel Minimum-Weight Perfect Matching Decoder: Toward Experimental Demonstration}

\author{Ryo Mikami}
\email{ryo-223606@g.ecc.u-tokyo.ac.jp}
\affiliation{
Department of Computer Science, Graduate School of Information Science and Technology, The University of Tokyo, 7--3--1 Hongo, Bunkyo-ku, Tokyo, 113--8656, Japan
}
\author{Hayata Yamasaki}
\email{hayata.yamasaki@gmail.com}
\affiliation{
Department of Computer Science, Graduate School of Information Science and Technology, The University of Tokyo, 7--3--1 Hongo, Bunkyo-ku, Tokyo, 113--8656, Japan
}

\begin{abstract}
    Fault-tolerant quantum computation (FTQC) requires fast and accurate decoding of quantum errors, which is often formulated as a minimum-weight perfect matching (MWPM) problem. A determinant-based approach has been proposed as a novel method to surpass the conventional polynomial runtime of MWPM decoding via the blossom algorithm, asymptotically achieving polylogarithmic parallel runtime. However, the existing approach requires an impractically large bit length to represent intermediate values during the computation of the matrix determinant; moreover, when implemented on a finite-bit machine, the algorithm cannot detect overflow, and therefore, the mathematical correctness of such algorithms cannot be guaranteed. In this work, we address these issues by presenting a polylog-time MWPM decoder that detects overflow in finite-bit representations by employing an algebraic framework over a truncated polynomial ring. Within this framework, all arithmetic operations are implemented using bitwise XOR and shift operations, enabling efficient and hardware-friendly implementation. Furthermore, with algorithmic optimizations tailored to the structure of the determinant-based approach, we reduce the arithmetic bit length required to represent intermediate values in the determinant computation by more than $99.9\%$, making it possible to implement it on machines supporting $512$-bit computing while preserving the decoder's polylog runtime scaling. These results open the possibility of a proof-of-principle demonstration of the polylog-time MWPM decoding in the early FTQC regime.
\end{abstract}

\maketitle

\tableofcontents

\section{Introduction}
\label{Sec:Introduction}
\subsection{Background and motivation}

Practical validation of theoretically established methods with finite-size devices, even when demonstrated in a small testbed, has played an important role in the development of quantum error correction (QEC).
Fault-tolerant quantum computation (FTQC) provides a framework for reliable quantum computation in the presence of noise, and among quantum error-correcting codes, the surface code is regarded as one of the leading approaches~\cite{fowler2012surface} due to its high error tolerance and practical implementability.
In theory, the threshold theorem guarantees that the logical error rate can be suppressed exponentially as we increase the distance $d$ of the surface code~\cite{dennis2002topological,fowler2012proof}.
More recently, experimental demonstrations of this error-suppression scaling in surface code at small code distances of $d = 3, 5, 7$ have had a significant impact~\cite{google2023suppressing,google2025quantum}.
Such small-scale validations serve as important milestones toward scalable FTQC\@. 

In practice, such demonstrations require a solid theoretical foundation for decoding.
The surface code preserves logical quantum states by extracting syndromes that indicate where errors have occurred and correcting those errors accordingly~\cite{dennis2002topological}.
QEC is performed by running classical decoding algorithms.
The time overhead of the classical decoder has a non-negligible impact on the overall runtime of FTQC~\cite{RevModPhys.87.307,takada2025doubly}, while its decoding performance directly affects the logical error rate of the computation.
In general, degenerate quantum maximum likelihood decoding (DQMLD), while achieving the optimal error-correction performance, is computationally hard~\cite{PhysRevA.83.052331,iyer2015hardness}.
In contrast, for the surface code, a decoding method based on minimum-weight perfect matching (MWPM) runs in polynomial time and provides a theoretical guarantee on the error suppression~\cite{dennis2002topological,fowler2012proof}.
There are other decoders with similar theoretical guarantees, such as the union-find (UF) decoder~\cite{delfosse2021almost,yoshida2026prooffinitethresholdunionfind}, the cellular-automaton (CA) decoder~\cite{kubica2019cellular,harrington2004analysis,herold2015cellular,vasmer2021cellular,balasubramanian2024local,lake2025fast,lake2025local}, and the renormalization-group (RG) decoder~\cite{bravyi2011analytic}, but their error suppression is weaker than MWPM decoding.
Heuristic decoders, such as belief propagation with ordered statistic decoding (BPOSD)~\cite{Panteleev2021degeneratequantum, PhysRevResearch.2.043423} and neural-network decoders~\cite{varsamopoulos2017decoding}, may achieve strong error suppression but do not provide the theoretical guarantee.
As a result, MWPM decoding for the surface code plays an important role in validating and demonstrating the feasibility of FTQC with the surface code\@.

A well-known limitation of conventional MWPM decoding is its large polynomial runtime exponent, which constrains decoding throughput and can become a bottleneck for the time overhead of FTQC~\cite{takada2025doubly}.
Conventionally, the decoder solves the MWPM problem using Edmonds' blossom algorithm~\cite{edmonds1965paths,edmonds1965maximum}.
The blossom algorithm proposed by Edmonds has a computational complexity of $O(|V|^4)$, where $|V|$ denotes the number of vertices in the graph. Micali and Vazirani~\cite{micali1980v,vazirani2020proofmvmatchingalgorithm} achieved a complexity of $O(\sqrt{|V|}\cdot |E|)$ by introducing techniques to accelerate the search for augmenting paths, where $|E|$ denotes the number of edges.
Kolmogorov~\cite{kolmogorov2009blossom} developed a faster implementation that is effective in practice by incorporating dual variables and priority queues.
In the context of surface-code decoding, representative implementations of MWPM decoders, such as sparse blossom~\cite{Higgott2025sparseblossom} and fusion blossom~\cite{wu2023fusion}, are also based on variants of the blossom algorithm.
Reference~\cite{fowler2013minimum} considered the problem of solving the MWPM decoding for the surface code within a short average parallel runtime, but in the worst case, it still requires a polynomial time complexity in $|V|$.
None of these blossom-algorithm-based methods can achieve polylog time complexity; even with parallelization, no blossom-algorithm-based method is currently known to achieve polylog parallel running time. 

More recently, Ref.~\cite{takada2025doubly} proposed an alternative algorithm for the MWPM decoding that asymptotically achieves polylog parallel runtime, substantially improving the conventional polynomial scaling of the runtime of MWPM decoders based on the blossom algorithms.
The MWPM decoder proposed in Ref.~\cite{takada2025doubly} is based on the algorithm of Mulmuley et al.~\cite{mulmuley1987matching,Mulmuley1987}, computing determinants of matrices obtained from the detector graph, where the matrix entries are integers that may take values exponential in the detector graph's edge weights (i.e., polynomial in the bit length).

The practical implementation of this polylog-time MWPM decoding, however, faces two fundamental challenges.
First, the determinant-based formulation assumes integer arithmetic that may require an impractically large number of bits to represent intermediate values during the computation of the matrix determinant.
Second, under finite-precision hardware constraints, overflow may silently corrupt intermediate values, thereby breaking the algorithm's theoretical correctness guarantee.
Despite the challenges, in the same spirit as the proof-of-principle demonstrations of error suppression in QEC~\cite{google2023suppressing, google2025quantum}, demonstrating the practical feasibility of this asymptotically nontrivial runtime scaling of MWPM decoding, even at small code distances, is valuable for the continued advancement of QEC\@.

\begin{figure}[tbp]
  \centering
  \includegraphics[width=3.4in]{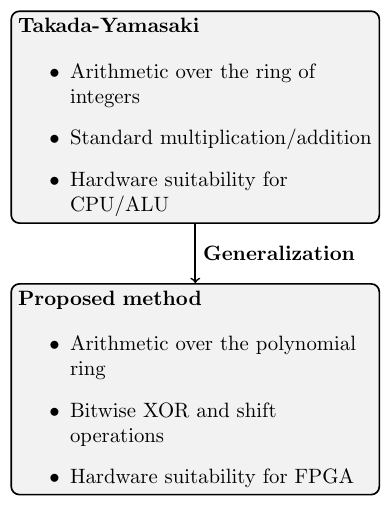}
  \caption{Conceptual comparison between Takada-Yamasaki~\cite{takada2025doubly} and the proposed bitwise algorithm.}
  \label{fig:comparison}
\end{figure}

\subsection{Summary of main results}

In this work, we address these challenges in the existing polylog-time parallel MWPM-decoding method by constructing and analyzing an approach that is feasible to implement at the same scale as current surface-code experiments.
Our method generalizes the previous determinant-based framework from integer arithmetic to bitwise polynomial arithmetic, which is fully implementable using bitwise operations while asymptotically preserving polylog time complexity.
To clarify the distinction between our method and the previous approach in Ref.~\cite{takada2025doubly}, we summarize the key contrasts in Fig.~\ref{fig:comparison}.

Our contributions are summarized as follows.

\begin{enumerate}
\item
We develop an algebraic framework that analytically guarantees the correctness of our determinant-based decoding method under a given finite-bit-length constraint. We reformulate the determinant computation over a truncated polynomial ring, where finite-bit arithmetic corresponds to polynomial truncation. In this framework, overflow becomes mathematically well-defined and naturally detectable during the decoding method. We prove that whenever the MWPM weight lies within the representable range, our decoding algorithm outputs exactly the same result as the previous MWPM-decoding method~\cite{takada2025doubly}; otherwise, i.e., if intermediate values during the determinant computation overflow on a fixed finite-bit machine, it explicitly signals a logical error. All arithmetic operations reduce to bitwise XOR and shift operations, making our approach naturally suitable for FPGA-style parallel architectures while asymptotically preserving polylog parallel depth, as theoretically proposed in Ref.~\cite{takada2025doubly}.
\item
Second, we remove dominant sources of bit-length blowup inherent in the previous procedure in Ref.~\cite{takada2025doubly}. First, we eliminate a costly weight-amplification procedure, a key modification from the previous procedure. Building upon this, we further introduce a variable-precision scheme that first performs a costly part, i.e., computing candidates of MWPMs, with low precision, and later efficiently verifies them with high precision, where the verification is at a low cost. Together, these refinements substantially reduce the required arithmetic bit length while maintaining the desired suppression of the logical error rate. 
In a proof-of-principle numerical simulation on surface codes with code distance $d=5$, we reduce the required arithmetic bit length for polylog-time MWPM decoding from $\lessapprox 6\times 10^5$ bits in the previous method~\cite{takada2025doubly} to $\lessapprox 5\times 10^2$ bits, i.e., by up to $99.9\%$, while keeping the decoding failure probability below the logical error rate for $d=5$.
The required number of classical processors remains sub-linear in the size of the detector graph.
\end{enumerate}

Our results establish the overflow-safe, hardware-feasible realization of asymptotically polylog-time parallel MWPM decoding, narrowing the gap between asymptotic optimality and practical implementation in FTQC architectures. Combined with prior results showing that polylog-time parallel decoding implies doubly-polylog-time-overhead FTQC~\cite{takada2025doubly}, our work strengthens the practical feasibility of such highly time-efficient FTQC, opening up the possibility of its proof-of-principle demonstration in the early FTQC regime.

\subsection{Organization of the paper}
The remainder of this paper is organized as follows.
In Sec.~\ref{Sec:Theoretically_guaranteeed} we construct our determinant-based polylog-time MWPM decoding algorithm, analyze its correctness, and present numerical results to evaluate its parameter choice.
In Sec.~\ref{Sec:Heuristic_modification}, we further optimize this algorithm by introducing heuristic techniques for substantially reducing the required arithmetic bit length, where we do not analytically provide the conditions on securing correctness, while also presenting the results of numerical simulations to support its correctness and propose appropriate parameter choices.
Finally, in Sec.~\ref{Sec:Discussion}, we discuss the implications of our results for low-overhead FTQC and outlines directions for future research.

\section{Overflow-safe polylog-time MWPM decoder}
\label{Sec:Theoretically_guaranteeed}

In this section, we first describe the proposed decoding algorithm based on polynomial arithmetic over a truncated polynomial ring in Sec.~\ref{Sec:CorrectAlgorithm}. In Sec.~\ref{Sec:Analysis}, we analyze its theoretical properties, including correctness and failure detectability under finite-bit representations. In Sec.~\ref{Sec:EvaluationTheoretically}, we provide numerical simulations to estimate the bit precision for reliable decoding.

\subsection{Description of the polylog-time parallel algorithm for MWPM decoding}\label{Sec:CorrectAlgorithm}
We describe the details of our polylog-time parallel algorithm, building on the approach that was introduced in Ref.~\cite{takada2025doubly} (see also Appendix~\ref{appendix:A} for the details of the approach in Ref.~\cite{takada2025doubly}). In Sec.~\ref{Sec:Task_of_decoding}, we describe the decoding task for the surface code. In Sec.~\ref{Sec:Polynomial_rings}, we introduce a mathematical framework that serves as the foundation of our methods. In Sec.~\ref{Sec:Parallel_algorithm}, we propose our polylog-time parallel algorithm for MWPM decoding.

\subsubsection{Task of decoding}\label{Sec:Task_of_decoding}
We summarize the decoding task for the surface code, which can be, in the context of FTQC, formulated as an MWPM problem on a weighted graph (see also, e.g., Ref.~\cite{takada2025doubly} for the details of the problem formulation). The surface code performs error correction by repeatedly applying syndrome extraction, which is represented as a syndrome-extraction circuit composed of state preparations, gates, measurements, and idle operations on physical qubits. We refer to each operation in this circuit as a location. Following the convention of Ref.~\cite{gottesman2013fault}, we assume a local stochastic Pauli error model. Namely, each location $j$ has an error parameter $p_j$, and for any set $A$ of locations, the probability that all locations in $A$ are faulty is upper bounded as
\begin{align}
    \Pr[\text{all locations in } A \text{ are faulty}] \le \prod_{j \in A} p_j. \label{Eq:WeightUpperBound}
\end{align}
Pauli errors may occur at faulty locations.

A syndrome measurement consists of measuring stabilizer generators, each of which is represented as a tensor product of Pauli operators, yielding a classical single-bit outcome, i.e., $0$ or $1$.
The syndrome-extraction circuit provides detectors, which are the parities of pairs of consecutive syndrome measurement outcomes used for error correction.
The detector is $0$ if there is no error, and physical errors may flip a set of detectors into $1$, which are called active detectors.
Using the detectors as vertices, we form a detector graph in which error chains are represented as paths over detectors.
However, error chains may also terminate on spatial or time-like boundaries of the decoding window in the syndrome-extraction circuit.
Such boundaries are addressed by introducing boundary vertices in the detector graph, which can serve as endpoints of error chains.
In the surface code, an error at a given location in the syndrome-extraction circuit may flip either one or two detectors; these cases are represented by edges connecting a detector to a boundary vertex, or connecting a pair of detectors, respectively.

Using these detectors and boundary vertices, we define a detector graph~\cite{Higgott2025sparseblossom}
\begin{align}
    G \coloneqq (V, E),
\end{align}
where each vertex represents either a detector or a boundary vertex. An edge $e\in E$ connects either one detector and one boundary, or two detectors. Edge weights $w(e)$ are defined by 
\begin{align}
    w(e) \coloneqq \left\lceil -C\ln\left(\sum_j p_j\right)\right\rceil, \label{Eq:discretized}
\end{align}
where $C\geq1$ is a scaling factor used for integer rounding, $\lceil \cdot\rceil$ denotes the ceiling function, $p_j$ denotes the error parameter of a location $j$ in~\eqref{Eq:WeightUpperBound}, and the sum is taken over multiple possible locations that may flip the detectors incident with the edge $e$.
This form corresponds to the negative log-likelihood of the error and ensures that the MWPM problem is equivalent to finding the most probable set of error chains.

From a detector graph, we construct a path graph (see also, e.g., Ref.~\cite{takada2025doubly} for the formal description of this procedure)
\begin{align}
    \overline{G} = \qty(\overline{V}, \overline{E}).
\end{align}
The weights of the path graph $w : \overline{E} \to \mathbb{Z}$ are obtained from the weights of the detector graph using Dijkstra's algorithm~\cite{dijkstra2022note}. Each edge in the path graph corresponds to a possible error chain between two detection events or between a detection event and a boundary vertex.

Given a path graph $\overline{G} = \qty(\overline{V}, \overline{E})$ with integer weights $w$, the decoding task is to find a perfect matching $M^* \subseteq \overline{E}$ that minimizes the total weight
\begin{align}
    M^* = \argmin_{M \subseteq \overline{E}} \sum_{e \in M} w(e).
\end{align}
The MWPM $M^*$ corresponds to the most probable configuration of physical errors consistent with the observed detection events and thus determines the appropriate recovery operation on the code block.

To summarize, the inputs to the decoding algorithm are a path graph $\overline{G} = \qty(\overline{V}, \overline{E})$ and a weight function $w : \overline{E} \to \mathbb{Z}$, and the output is an MWPM $M^*$. Finding such a matching feasibly in polylog parallel time is the goal of decoding in this work.

\subsubsection{Finite-digit binary numbers and truncated polynomial rings}\label{Sec:Polynomial_rings}

In a modern computer, numbers are represented as finite-digit binary values. Due to hardware constraints, digits beyond a certain length are truncated as overflow. Such behavior does not occur in the ring of integers; therefore, the assumption that overflow never occurs underlies the guarantee of mathematical correctness in the previous analysis~\cite{takada2025doubly}, which may indeed be violated in practical implementations due to finite arithmetic bit lengths.

To preserve mathematical consistency under such practical constraints, we introduce the following natural correspondence
\begin{align}
    &{a_{n-1} a_{n-2} \cdots a_0}_{(2)} \nonumber \\
    &\xrightarrow{} a_{n-1}X^{n-1} + a_{n-2}X^{n-2} + \cdots + a_0.
\end{align}
Here, the left-hand side represents the binary representation of an integer, where $a_i \in \{0, 1\}$ and the representable bit length is $n$. The fact that digits beyond $n$ bits are discarded due to overflow corresponds to the condition that $X^n = 0$ holds, or equivalently, to taking the remainder upon the division of a polynomial by $X^n$. Therefore, a finite-digit binary number can be rigorously represented as the remainder of a polynomial over the finite field $\mathbb{F}_2 = \{0, 1\}$ modulo $X^n$. The set of all such polynomials forms a ring denoted by $\mathbb{F}_2[X]/(X^n)$.

Using this algebraic structure, arithmetic operations such as addition and multiplication over the integers can be consistently defined as addition and multiplication in $\mathbb{F}_2[X]/(X^n)$, respectively, incorporating overflow in an algebraically coherent manner. Although arithmetic in the ring of integers and that in the polynomial ring generally differ, as shown later in Theorem~\ref{thm:correctness}, our algorithm produces identical results under this correspondence. Consequently, computations with finite-bit representations on actual hardware can be modeled in a mathematically rigorous manner, thereby ensuring the overall correctness of our algorithm.

Although our algorithm will be described using polynomials, it can be implemented using bitwise operations, as shown by the following theorem.

\begin{theorem}[\label{thm:bitlevel}Bit-Level Computability]
    For any constant positive integers $w_\mathrm{th}\in\{1,2,\ldots\}$, polynomial arithmetic over $\mathbb{F}_2[X]/(X^{w_{\mathrm{th}}})$ can be implemented by bitwise XOR and shift operations. 
\end{theorem}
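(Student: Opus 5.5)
We represent each element $f=\sum_{i=0}^{w_{\mathrm{th}}-1} a_i X^i$ of $\mathbb{F}_2[X]/(X^{w_{\mathrm{th}}})$ by the $w_{\mathrm{th}}$-bit word $a_{w_{\mathrm{th}}-1}\cdots a_0$, with coefficient $a_i$ stored in bit $i$, as in the correspondence above. The map is a bijection between ring elements and $w_{\mathrm{th}}$-bit words, because every residue class modulo $X^{w_{\mathrm{th}}}$ has a unique representative of degree less than $w_{\mathrm{th}}$. It therefore suffices to realize the two ring operations, addition and multiplication, on these words using only XOR and shifts. The constants $0$ and $1$ are the all-zero word and the word with only bit $0$ set.

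Addition is immediate. Coefficients live in $\mathbb{F}_2$, so $(f+g)_i = a_i + b_i \bmod 2 = a_i \oplus b_i$. No carries propagate, and reduction modulo $X^{w_{\mathrm{th}}}$ cannot change the degree. Hence $f+g$ is the bitwise XOR of the two words. Subtraction coincides with addition in characteristic $2$, so it needs nothing further.

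For multiplication we first treat multiplication by a monomial. Multiplying $f$ by $X^k$ sends bit $i$ to bit $i+k$. Reducing modulo $X^{w_{\mathrm{th}}}$ then discards every term of degree at least $w_{\mathrm{th}}$. This is exactly a logical left shift by $k$ on a $w_{\mathrm{th}}$-bit register, where bits shifted past position $w_{\mathrm{th}}-1$ fall off: the overflow behaviour is the ring relation $X^{w_{\mathrm{th}}}=0$. For general $g=\sum_k b_k X^k$, distributivity and the fact that reduction is a ring homomorphism give
\begin{align}
  f\cdot g \bmod X^{w_{\mathrm{th}}} = \bigoplus_{k:\, b_k=1} \bigl(f \ll k\bigr),
\end{align}
where $\ll$ is the truncating left shift. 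This is the shift-and-XOR (carry-less) multiplication. The selection on $b_k$ can itself be written without branching: take the AND of each shifted copy with the mask obtained by replicating bit $b_k$. If only XOR and shift are permitted, note that the bit $b_k$ is just a control signal choosing which shifted copies enter the XOR sum, so the product is still a function of XORs of shifts of $f$. The ring axioms hold automatically because these operations compute the quotient-ring operations exactly.

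Finally, we check the cost, which matters for the later polylog claim. One multiplication uses at most $w_{\mathrm{th}}$ shifts and $w_{\mathrm{th}}-1$ XORs. In parallel it needs $w_{\mathrm{th}}$ independent shifts followed by an XOR tree of depth $\lceil\log_2 w_{\mathrm{th}}\rceil$, so the depth is $O(1)$ when $w_{\mathrm{th}}$ is constant. The main point requiring care is justifying that truncating shift registers compute exactly multiplication modulo $X^{w_{\mathrm{th}}}$ and not some other reduction. This holds because $(X^{w_{\mathrm{th}}})$ is an ideal, so reducing every partial product before summing gives the same result as reducing the full product once at the end.
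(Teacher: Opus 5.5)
Your proposal is correct and takes essentially the same approach as the paper. Each element is encoded as the $w_{\mathrm{th}}$-bit string of its coefficients, addition is bitwise XOR, and multiplication is $\bigoplus_{k:\,b_k=1}(f\ll k)$, with the truncating shift implementing $X^{w_{\mathrm{th}}}=0$; your extra remarks (the ideal argument for reducing partial products and the XOR-tree depth count) only make explicit what the paper leaves implicit.
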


\begin{proof}
We first show that any element in $\mathbb{F}_2[X]/(X^{w_{\mathrm{th}}})$ can be represented as a bit string.
Since the coefficients in this polynomial ring take values in $\{0, 1\}$, and since $X^w = 0$ for any $w \ge w_{\mathrm{th}}$, each polynomial has finitely many terms. 
Therefore, by directly mapping each coefficient to its corresponding bit, with $1$ indicating the presence and $0$ the absence of a term, every element can be expressed as a bit string of length $w_{\mathrm{th}}$. 
We then show that addition and multiplication operations can be implemented using bitwise operations.

\smallskip
\textit{Addition.}
For $a, b \in \mathbb{F}_2[X]/(X^{w_{\mathrm{th}}})$ and for each $0 \le n < w_{\mathrm{th}}$,
\begin{align}
    (a + b)[X^n] =
    \begin{cases}
        1 & (a[X^n] \ne b[X^n]), \\
        0 & (a[X^n] = b[X^n]),
    \end{cases}
\end{align}
where $p[X^n]$ denotes the coefficient of $X^n$ in $p$.
This relation is equivalent to
\begin{align}
    a + b = a_{\text{bit}} \oplus b_{\text{bit}},
\end{align}
where $p_{\text{bit}}$ denotes the corresponding bitwise representation of $p$, and $\oplus$ denotes the bitwise XOR.
Thus, the addition in $\mathbb{F}_2[X]/(X^{w_{\mathrm{th}}})$ corresponds exactly to a bitwise XOR operation.

\smallskip
\textit{Multiplication.}
For $a, b \in \mathbb{F}_2[X]/(X^{w_{\mathrm{th}}})$, we have
\begin{align}
    a \times b = \sum_{i=0}^{w_{\mathrm{th}}-1} \big( a \cdot b[X^i] X^i \big).
\end{align}
This can be rewritten as
\begin{align}
    a \times b = \bigoplus_{i:\, b[X^i] = 1} (a_{\text{bit}} \ll i),
\end{align}
where $\ll$ denotes the left-shift operation.
Since whether $b[X^i] = 1$ can be determined directly from $b_{\text{bit}}$, multiplication is realized by a sequence of bitwise shifts and XORs.

Thus, all operations over $\mathbb{F}_2[X]/(X^{w_{\mathrm{th}}})$ can be implemented solely using bit-level XOR and shift operations, demonstrating that computations in this ring are implementable directly on bits.
\end{proof}

\subsubsection{A parallel algorithm for finding an MWPM via polynomial arithmetic}\label{Sec:Parallel_algorithm}

Given a path graph $\overline{G} = \qty(\overline{V}, \overline{E})$ with edge weights $w: \overline{E} \to \mathbb{Z}$, we propose a parallel algorithm for finding an MWPM in $\overline{G}$ based on polynomial arithmetic over a truncated polynomial ring $\mathbb{F}_2[X]/(X^{w_{\mathrm{th}}})$ with a fixed constant $w_\mathrm{th}$.
In this framework, the truncation threshold $w_{\mathrm{th}}$ is set in advance, prior to computation, based on the available computational resources and hardware constraints.

Our algorithm is described as follows.

\begin{enumerate}
  \item Adjust each edge weight and replace $w$ by $\tilde{w}$ for each $e \in \overline{E}$ as
  \begin{align}
      \tilde{w}(e) \coloneqq \tilde{C}w(e) + W(e), \label{Eq:tildew}
  \end{align}
  where we take the factor $\tilde{C}$ and the perturbation function $W$ as
  \begin{align}
      &\tilde{C} \coloneqq \frac{|\overline{V}|}{2}(W_{\max} - 1) + 1, \label{Eq:tildeC} \\
      &W: \overline{E} \xrightarrow{} \mathbb{N}, \\
      &1\leq W(e) \leq W_{\max},  \label{Eq:rangeofW} \\
      &W_{\max} \coloneqq \max_{e \in \overline{E}}\{W(e)\}. \label{Eq:Wmax} 
  \end{align}
  Here, for each $e \in \overline{E}$, the perturbation $W(e)$ is drawn independently and uniformly at random from the integer set $\{1, \ldots, W_{\max}\}$.
  With this perturbation, with high probability, one of the MWPMs in $\overline{G}$ becomes a unique MWPM in the resulting weighted graph\@.
As will be shown later in Theorem~\ref{thm:correctness}, if the resulting weighted graph has a unique MWPM, then the algorithm's output subset $M$ represents this unique MWPM, which is one of the MWPMs of $\overline{G}$, similar to the previous algorithm in Ref.~\cite{takada2025doubly} based on integer arithmetic.
  \item Let $A$ be a Tutte matrix with its element $A_{i, j}$ given by 
  \begin{align}
      A_{i, j} = 
      \begin{cases}
          x_{i, j} & (i < j, \{i, j\} \in \overline{E}), \\
          -x_{i, j} & (i > j, \{i, j\} \in \overline{E}), \\
          0 & \text{otherwise},
      \end{cases}
  \end{align}
  where $x_{i, j}$ is a variable. Then, for each edge $\{i, j\} \in \overline{E}$, we assign 
  \begin{align}
      x_{i, j} = X^{\tilde{w}(\{i, j\})}, \label{Eq:indeterminate}
  \end{align}
  where $X$ is the indeterminate of $\mathbb{F}_2[X]/(X^{w_{\mathrm{th}}})$. Since we are working on $\mathbb{F}_2$ when computing the coefficient of $X^w$, we then obtain a matrix $B$ defined as
    \begin{align}
      B_{i, j} = 
      \begin{cases}
          X^{\tilde{w}(\{i, j\})} &\{i, j\} \in \overline{E}, \\
          0 & \text{otherwise}, \\
      \end{cases}
      \label{Eq:Tuttemat2}
  \end{align}
  from an assignment~\eqref{Eq:indeterminate}.
  \item Compute the determinant $\det(B)$ of $B$ and obtain
  \begin{align}
      w^* \coloneqq \frac{\deg_{\min}\left(\det(B)\right)}{2}, \label{Eq:minweight2}
  \end{align}
  where $\deg_{\min}(f)$ denotes the minimum exponent of $X$ appearing in the polynomial $f$.
  Note that the determinant of ring-valued matrices can be computed in polylog parallel runtime by the Samuelson-Berkowitz algorithm~\cite{berkowitz1984computing} (see also Sec.~S2 of Ref.~\cite{takada2025doubly} for the details of this method). 
  \item For each edge, calculate a minor $M^B_{i, j} = \det(B^{(i, j)})$, where $B^{(i, j)}$ is defined as an $\qty(|\overline{V}| - 1) \times \qty(|\overline{V}| - 1)$ submatrix obtained from $B$ by removing the $i$th row and the $j$th column. Finally, we obtain
  \begin{align}
      M = \left\{\{i, j\}\middle| \deg_{\min}\left(M^B_{i, j} \cdot X^{\tilde{w}(\{i, j\})}\right) = w^*\right\} \subseteq \overline{E}. \label{Eq:setofMWPM2}
  \end{align}
  When step~3 computes $w^* = 0$, the algorithm terminates and outputs a failure indicator, which is interpreted as a logical error.
\end{enumerate}

The following theorem shows that this algorithm can be parallelized to achieve a polylog parallel runtime.

\begin{theorem}[\label{thm:parallel}Parallelizability]
Given a path graph of size $\overline{V}$ and a perturbation range $W_{\max}$, all computational steps of the algorithm are parallelizable with polylog depth $O\qty(\polylog\qty(|\overline{V}|, |W_{\max}|))$ using bitwise operations.
\end{theorem}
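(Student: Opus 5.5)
We go through the four steps of the algorithm in Sec.~\ref{Sec:Parallel_algorithm} one at a time. For each step we bound the parallel depth by a polylogarithm in $|\overline{V}|$ and $W_{\max}$, using a polynomial number of processors. We then add the depths.

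Throughout, each ring element is treated as a $w_{\mathrm{th}}$-bit string, as in Theorem~\ref{thm:bitlevel}, with $w_{\mathrm{th}}$ a fixed constant. By Theorem~\ref{thm:bitlevel}, addition in $\mathbb{F}_2[X]/(X^{w_{\mathrm{th}}})$ is a single bitwise XOR, so it has depth $O(1)$. For multiplication, we compute the at most $w_{\mathrm{th}}$ shifted copies $a_{\text{bit}}\ll i$ in parallel and combine them with a balanced XOR tree, which gives depth $O(\log w_{\mathrm{th}})=O(1)$.

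\textbf{Step~1 (reweighting).} We draw the $W(e)$ independently for all $e\in\overline{E}$, so the draws run in parallel with depth $O(1)$. We compute $\tilde{C}$ once. Each product and sum $\tilde{C}w(e)+W(e)$ is an integer operation on $O(\log(|\overline{V}|W_{\max})+\log\max_e w(e))$ bits. Standard parallel adders and multipliers do this in depth polylogarithmic in that bit length.

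\textbf{Step~2 (building $B$).} Each entry $B_{i,j}=X^{\tilde{w}(\{i,j\})}$ is the bit string with a single $1$ in position $\tilde{w}(\{i,j\})$. It is the zero string if $\tilde{w}(\{i,j\})\ge w_{\mathrm{th}}$. Each entry is produced by one comparison and one shift of the constant $1$. All $|\overline{V}|^2$ entries are independent, so this step has depth $O(\log \tilde{w})$, which is polylogarithmic in $|\overline{V}|$ and $W_{\max}$.

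\textbf{Step~3 (determinant).} This is the main part. We invoke the Samuelson--Berkowitz algorithm~\cite{berkowitz1984computing}. It is division-free and works over any commutative ring, in particular $\mathbb{F}_2[X]/(X^{w_{\mathrm{th}}})$. It expresses the characteristic polynomial as an iterated product of $O(|\overline{V}|)$ Toeplitz matrices whose entries are products of submatrix powers. Using parallel prefix products and repeated squaring, the whole computation has depth $O(\log^2|\overline{V}|)$ in ring operations, with polynomially many processors. The determinant is read off as the constant coefficient of the characteristic polynomial. Each ring operation has depth $O(1)$ by the bound above, so the total depth is $O(\log^2|\overline{V}|)$. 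Computing $\deg_{\min}(\det B)$ means finding the lowest-index set bit of a $w_{\mathrm{th}}$-bit string. A parallel OR/prefix tree does this in depth $O(\log w_{\mathrm{th}})$, and halving it is a right shift.

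\textbf{Step~4 (minors).} All minors $M^B_{i,j}$, for the at most $|\overline{V}|^2$ edges, are computed simultaneously by independent Berkowitz instances, so the depth is the same as in Step~3. Alternatively, the adjugate can be obtained from the characteristic polynomial. Each minor is then multiplied by $X^{\tilde{w}}$, which is one shift. We compute its $\deg_{\min}$ and compare it with $w^*$, and we test $w^*=0$. All of these operations have constant or logarithmic depth.

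Summing the four contributions gives depth $O(\polylog(|\overline{V}|,W_{\max}))$. The step I expect to be the main obstacle is justifying carefully that Berkowitz's algorithm needs no division and no characteristic assumptions. The truncated ring has characteristic $2$ and contains nilpotents, so methods that rely on inverting $1,\dots,n$ (Csanky) or on pivoting are unavailable. The plan is to cite that Berkowitz uses only ring additions and multiplications. This is the point where the constant bit length of ring elements, rather than integers of growing length, keeps each ring operation at constant depth.
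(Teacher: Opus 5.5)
Your proposal is correct and follows essentially the same route as the paper: compute the minors independently in parallel, obtain each determinant with the division-free Samuelson--Berkowitz algorithm over $\mathbb{F}_2[X]/(X^{w_{\mathrm{th}}})$ in $O(\log^2|\overline{V}|)$ ring-operation depth, and realize each ring operation by bitwise XORs and shifts of logarithmic depth. You also spell out Steps~1, 2 and the $\deg_{\min}$ extraction, which the paper leaves implicit.

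The one difference is bookkeeping. You collapse the per-operation depth $O(\log w_{\mathrm{th}})$ to $O(1)$ by treating $w_{\mathrm{th}}$ as constant. The paper instead takes the bit length to be $N=O(\poly(|\overline{V}|,W_{\max}))$ and charges $O(\log N)$ per operation. That larger bit length is what Theorem~\ref{thm:correctness} needs for $w_{\mathrm{th}}>2w^*$, because $\tilde{C}$ grows with $|\overline{V}|W_{\max}$. If you keep your $O(\log w_{\mathrm{th}})$ factor, you get the same polylog bound, so do not rely on your closing claim that the truncated ring keeps each ring operation at constant depth.
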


\begin{proof}
We show that the proposed algorithm can be parallelized in polylog time.
First, in Step~4, each minor is computed independently and can therefore be evaluated in parallel using $O\qty(|\overline{V}|^2)$ processors.  
Thus, it suffices to show that the determinant of an $O\qty(|\overline{V}|) \times O\qty(|\overline{V}|)$ matrix can be computed in parallel within polylog time.  

As indicated above, it is known that the determinant of such a matrix with entries in polynomial rings can be computed in $O\qty(\log^2 |\overline{V}|)$ parallel time using $O(\mathrm{poly}(|\overline{V}|))$ processors, as achieved by the Samuelson--Berkowitz algorithm~\cite{berkowitz1984computing}.  
Since this algorithm does not require division, it can also be applied to matrices $B$ in~\eqref{Eq:Tuttemat2} whose entries are polynomials over $\mathbb{F}_2[X]/(X^{w_{\mathrm{th}}})$.  

From Theorem~\ref{thm:bitlevel}, each polynomial in $\mathbb{F}_2[X]/(X^{w_{\mathrm{th}}})$ can be represented as a bit string, and both addition and multiplication over this ring can be implemented using bitwise operations as the standard arithmetic operations. As the addition and multiplication of $N$-bit integers can be performed in $O(\log N)$ parallel time using $O(\mathrm{poly}(N))$ processors, the same parallel complexity applies to operations in $\mathbb{F}_2[X]/(X^{w_{\mathrm{th}}})$. In our case, from \eqref{Eq:tildew}, we have $N = O(\mathrm{poly}(|\overline{V}|, W_{\max}))$.  

Therefore, the total runtime for finding the unique MWPM is
\begin{align}
    O\qty(\polylog\qty(|\overline{V}|, |W_{\max}|)),
\end{align}
and the required number of processors is
\begin{align}
    O\qty(\poly\qty(|\overline{V}|, |W_{\max}|)).
\end{align}
\end{proof}

\subsection{Theoretical analysis of the proposed method}
\label{Sec:Analysis}
We analyze the theoretical properties of the proposed method, focusing on its correctness in Theorem~\ref{thm:correctness} and failure detectability in Theorem~\ref{thm:failure}.

First, the following theorem shows that our algorithm produces results identical to those of the existing MWPM decoder in Ref.~\cite{takada2025doubly} if no overflow occurs.

\begin{theorem}[\label{thm:correctness}Correctness]
    Let $w^*$ denote the MWPM weight for the input graph $\overline{G} = (\overline{V}, \overline{E})$ with edge weights $\tilde{w}: \overline{E} \to \mathbb{Z}$. If
    \begin{align}
        w^{*}<\frac{w_{\mathrm{th}}}{2}, 
    \end{align}
    then our algorithm outputs a unique and correct MWPM with high probability at least $1 - |V|/W_{\mathrm{max}}$.
\end{theorem}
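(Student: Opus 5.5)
We argue in three steps: show uniqueness of the MWPM under the perturbation $W$; show that over $\mathbb{F}_2$, $\deg_{\min}\det(B)=2w^*$ with nonzero coefficient; and show that the minor test in Step~4 selects exactly the edges of the unique MWPM. The hypothesis $w^*<w_{\mathrm{th}}/2$ will ensure that all relevant monomials survive truncation modulo $X^{w_{\mathrm{th}}}$.

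\textbf{Step 1: uniqueness.} By the isolation lemma of Mulmuley--Vazirani--Vazirani, with probability at least $1-|\overline{E}|/W_{\max}$ (which the statement writes as $1-|V|/W_{\max}$ in the paper's accounting), the perfect matching of $\overline{G}$ minimizing $\sum_e W(e)$ among the $w$-minimum-weight perfect matchings is unique. We apply the lemma to the family of $w$-optimal matchings.

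The choice $\tilde C=\frac{|\overline V|}{2}(W_{\max}-1)+1$ makes $\tilde w$ lexicographic. A perfect matching has $|\overline V|/2$ edges, so its $W$-sum lies in $[|\overline V|/2,\,|\overline V|W_{\max}/2]$. Two matchings with different $w$-weights therefore differ in $\tilde w$ by at least $\tilde C-\frac{|\overline V|}{2}(W_{\max}-1)\ge 1$ in the correct direction. Hence the $\tilde w$-minimum matching is unique and is an MWPM of $(\overline G,w)$.

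\textbf{Step 2: the minimum-degree term of the determinant.} Expand $\det(B)$ over $\mathbb{Z}[X]$ and reduce coefficients mod $2$. By the standard Tutte-matrix argument, permutations containing an odd cycle cancel in pairs under cycle reversal; over characteristic $2$ this pairing still cancels identical monomials, and fixed points vanish since $B_{ii}=0$. Each surviving permutation is a union of even cycles, hence a pair of perfect matchings $(M_1,M_2)$ with monomial $X^{\tilde w(M_1)+\tilde w(M_2)}$.

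This is the main obstacle: over $\mathbb{F}_2$ we must verify that the coefficient of $X^{2w^*}$ is odd rather than merely nonzero over $\mathbb{Z}$. A permutation attaining exponent $2w^*$ needs $\tilde w(M_1)=\tilde w(M_2)=w^*$, so $M_1=M_2=M^*$ by uniqueness. The permutation is then the involution formed by the transpositions of $M^*$ (length-2 cycles), and it is unique, so the coefficient is exactly $1$ mod $2$. I would handle the odd-cycle cancellation carefully, since reversing an odd cycle maps a permutation to a distinct one with the same monomial, which gives an even count. All other surviving monomials have exponent $>2w^*$. Since $2w^*<w_{\mathrm{th}}$, the term $X^{2w^*}$ is not truncated, so $\deg_{\min}\det(B)=2w^*$ and Step~3 recovers $w^*$; truncation only removes higher terms, which is harmless. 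We also confirm $w^*\neq0$, because every $\tilde w\ge \tilde C+1>0$ and a perfect matching is nonempty.

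\textbf{Step 3: the minor test.} For an edge $\{i,j\}$, $M^B_{i,j}X^{\tilde w(\{i,j\})}$ collects the terms of $\det B$ whose permutations send $i\mapsto j$ (signs are irrelevant mod $2$). Apply the same cancellation to permutations with $\sigma(i)=j$, reversing only cycles not containing $i$. The cycle through $i$, if odd, must be handled separately; I expect to argue that its exponent is $>2w^*$, since such a cycle plus the rest cannot reach the bound given uniqueness and the even-cover bound (an odd cycle's weight with the remaining structure is at least that of two distinct perfect matchings).

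It then follows that the exponent $2w^*$ appears with coefficient $1$ if and only if $\{i,j\}\in M^*$, and all other exponents are larger. Since $2w^*<w_{\mathrm{th}}$, this is preserved modulo $X^{w_{\mathrm{th}}}$. The output $M$ therefore equals $M^*$, which proves the claim with the stated probability.
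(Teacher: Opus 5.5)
Your overall plan is sound, but you leave one case in the minor step open, and the justification you sketch for it does not work as phrased: an odd cycle together with even cycles does not split into two perfect matchings. In fact the case never arises. Within $\{\sigma:\sigma(i)=j\}$, the terms with a fixed point vanish. Cancel those with an odd cycle of length $\ge 3$ avoiding $i$ by reversing, among such cycles, the one containing the smallest vertex; this is an involution that preserves both $\sigma(i)=j$ and the monomial. Every survivor then has all cycles avoiding $i$ of even length. Since $|\overline V|$ is even, the cycle through $i$ is even too. So every survivor is an even cycle cover, i.e.\ an ordered pair $(M_1,M_2)$ with exponent $\tilde w(M_1)+\tilde w(M_2)\ge 2w^*$, with equality only for $\sigma=\sigma_{M^*}$, and your conclusion follows.

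Separately, your isolation step yields $1-|\overline E|/W_{\max}$, which is a union bound over edges. Your closing claim ``with the stated probability'' therefore overclaims. The paper's proof gets $|\overline V|/W_{\max}$ only by citing the same lemma, without an argument that would give a vertex count.

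With that fixed, your route differs from the paper's at the key step, and usefully so. The paper uses characteristic $2$ to cancel \emph{every} permutation with a cycle of length $>2$, giving $\det(B)=\sum_M X^{2\tilde w(M)}$; this is valid for $\det(B)$ itself. For the minors it claims the same cancellation and writes $M^B_{i,j}X^{\tilde w(\{i,j\})}=X^{2w^*}\sum_{\sigma\in S_2:\,\sigma(i)=j}X^{\epsilon(\sigma)}$. But reversing the long cycle through $i$ changes $\sigma(i)$, so those permutations have no partner. For example, on $K_4$ with $(i,j)=(1,2)$, the $4$-cycles $(1\,2\,3\,4)$ and $(1\,2\,4\,3)$ contribute $x_{1,2}x_{2,3}x_{3,4}x_{1,4}+x_{1,2}x_{1,3}x_{2,4}x_{3,4}$, which generically do not cancel. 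The paper's lowest-degree conclusion survives only because these terms correspond to pairs $M_1\neq M_2$ and so have exponent $>2w^*$ under uniqueness. That is exactly your pair-of-matchings bound. Your approach gives up the paper's clean closed form, but it is the one that actually proves Step~4. You also correctly test for degree $2w^*$; the selection rule in Step~4 of the algorithm writes $w^*$, which is a slip.
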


\begin{proof}
We will check the correctness of each algorithmic step.

\smallskip
\textit{Step~1.}
In this step, the edge weights are adjusted. Let $M^*$ denote the MWPM of $\overline{G} = \qty(\overline{V}, \overline{E})$ with weights $\tilde{w}$, and let $M'$ be any perfect matching that is not an MWPM of $\overline{G}$ with weights $w$. 
Noting that $|M^*| = |M'| = |\overline{V}|/2$ and from~\eqref{Eq:rangeofW}, we have
\begin{align}
    \sum_{e \in M^*} \tilde{w}(e) &\le \tilde{C}\sum_{e \in M^*} w(e) + \frac{|\overline{V}|}{2} W_{\max}, \\
    \sum_{e \in M'} \tilde{w}(e) &\ge \tilde{C}\sum_{e \in M'} w(e) + \frac{|\overline{V}|}{2}.
\end{align}
Thus, for $M'$ and $M^*$, it holds that
\begin{align}
    &\sum_{e \in M'} \tilde{w}(e) - \sum_{e \in M^*} \tilde{w}(e) \nonumber \\
    &\ge \tilde{C}\left(\sum_{e \in M'} w(e) - \sum_{e \in M^*} w(e)\right) - \frac{|\overline{V}|}{2}(W_{\max} - 1) \\
    &\ge \tilde{C} - \frac{|\overline{V}|}{2}(W_{\max} - 1) \\
    &\ge 1.
\end{align}
This inequality shows that $M^*$ has a smaller weight than any other perfect matching, guaranteeing that $M^*$ is one of the MWPMs in $\overline{G}$ with weights $w$.

By the isolation lemma~\cite{mulmuley1987matching,Mulmuley1987}, we also have
\begin{align}
    \Pr[\text{Isolation fails to hold}] \le \frac{|\overline{V}|}{W_{\max}},
\end{align}
implying that, with high probability, the MWPM in $\overline{G}$ with weights $\tilde{w}$ is unique. 
From here onward, we assume $\tilde{w}(e)$ as the working weights, and the MWPM is unique.

\smallskip
\textit{Step~2.}
We construct the matrix $B$ as defined in~\eqref{Eq:Tuttemat2}.

\smallskip
\textit{Step~3.}
For this step of computing $\det(B)$ to determine the weight of the MWPM, we will show that~\eqref{Eq:minweight2} indeed yields the weight. 
Recall that
\begin{align}
    \det(B) = \sum_{\sigma} \operatorname{sign}(\sigma) \prod_{i=1}^{|\overline{V}|} b_{i,\sigma(i)},
\end{align}
where $\sigma$ is a permutation, $\operatorname{sign}(\sigma)$ is its sign, and $b_{i,j}$ denotes the $(i,j)$-th entry of $B$. Since we are working on $\mathbb{F}_2[X]/(X^{w_{\mathrm{th}}})$, we have
\begin{align}
    \det(B) = \sum_{\sigma} \prod_{i=1}^{|\overline{V}|} b_{i,\sigma(i)}.
\end{align}
Each $\sigma$ corresponds to a subgraph of $\overline{G}$, since $\{i,\sigma(i)\}$ represents an edge and every permutation contains at least one cycle. 
We partition the set of permutations into $S_2$, the subset consisting entirely of 2-cycles, and $S_{>2}$, the subset containing permutations with cycles of length greater than 2.

Then we have
\begin{align}
    \det(B) &= \sum_{\sigma \in S_2} \prod_{i=1}^{|\overline{V}|} b_{i,\sigma(i)} \nonumber  + \sum_{\sigma \in S_{>2}} \prod_{i=1}^{|\overline{V}|} b_{i,\sigma(i)}.
\end{align}

For $\sigma_{>2} \in S_{>2}$, there exists another permutation $\sigma_{>2}'$ obtained by reversing a cycle.
Since $b_{i,j} = b_{j,i}$, we have
\begin{align}
    \prod_{i=1}^{|\overline{V}|} b_{i,\sigma_{>2}(i)} = \prod_{i=1}^{|\overline{V}|} b_{i,\sigma_{>2}'(i)}.
\end{align}
Recalling that $1 + 1 = 0$ over $\mathbb{F}_2$, we see that the contributions from $S{>2}$ cancel out
\begin{align}
    \sum_{\sigma \in S_{>2}} \operatorname{sign}(\sigma) \prod_{i=1}^{|\overline{V}|} b_{i,\sigma(i)} = 0.
\end{align}

Computing the rest, we have
\begin{align}
    \sum_{\sigma \in S_2} \operatorname{sign}(\sigma)\prod_{i = 1}^{|\overline{V}|} b_{i, \sigma(i)}
    &= \sum_{\sigma \in S_2} X^{\sum_{i = 1}^{|\overline{V}|} \tilde{w}_{i, \sigma(i)}} \\
    &= \sum_{\sigma \in S_2} X^{2w^* + \epsilon(\sigma)}.
\end{align}
Here, $w^*$ denotes the weight of the MWPM in $\overline{G}$ with weights $\tilde{w}$, and $\epsilon(\sigma)$ over $\sigma \in S_2$ is defined as
\begin{align}
    \epsilon(\sigma) \coloneqq \sum_{i = 1}^{|\overline{V}|} \tilde{w}_{i, \sigma(i)} - 2w^*.
\end{align}
Considering the graph corresponding to each permutation in $S_2$, from the properties of the MWPM, we have $\epsilon(\sigma) \in \mathbb{Z}$ and
\begin{align}
    \epsilon(\sigma) =
        \begin{cases}
        0 & (\sigma = \sigma_M),\\
        > 0 & (\sigma \neq \sigma_M),
        \end{cases}
\end{align}
where $\sigma_M$ denotes the permutation corresponding to the MWPM. This holds because a permutation composed of 2-cycles corresponds to a perfect matching in the graph.
Then we have
\begin{align}
    \det(B) = X^{2w^*}\sum_{\sigma \in S_2} X^{\epsilon(\sigma)}.
\end{align}
Under the assumption that the MWPM is unique, there exists exactly one permutation $\sigma$ such that $\epsilon(\sigma)=0$. Moreover, because $w_{\mathrm{th}} > 2w^*$, the term $X^{2w^*}$ remains in the truncated polynomial ring $\mathbb{F}_2[X]/(X^{w_{\mathrm{th}}})$. Therefore, $w^*$ indeed corresponds to the MWPM weight.

\smallskip
\textit{Step~4.}
We compute the minors for each edge and use the result from Step~3 to identify edges belonging to the MWPM. 
From the Laplace expansion
\begin{align}
    \left|M^B_{i,j}\right| \cdot X^{\tilde{w}(\{i,j\})} = \sum_{\sigma:\,\sigma(i)=j} 
    \operatorname{sign}(\sigma) \prod_{i=1}^{|\overline{V}|} b_{i,\sigma(i)}, 
\end{align}
we have
\begin{align}
    M^B_{i,j} \cdot X^{\tilde{w}(\{i,j\})} = \sum_{\sigma:\,\sigma(i)=j} \prod_{i=1}^{|\overline{V}|} b_{i,\sigma(i)}.
\end{align}
The contribution from permutations with cycles of length greater than 2 cancels out with that of its reverse, as argued above. 
Thus, we have
\begin{align}
    M^B_{i,j} \cdot X^{\tilde{w}(\{i,j\})}
    = X^{2w^*}\sum_{\sigma \in S_2:\, \sigma(i) = j} X^{\epsilon(\sigma)},
\end{align}
where $S_2$ denotes the set of permutations consisting of 2-cycles.  
Recalling the definition of $\epsilon(\sigma)$, for $\sigma$ that satisfy $\sigma(i) = j$, we have 
\begin{align}
    \epsilon(\sigma) =
    \begin{cases}
        0 & (\sigma = \sigma_M),\\
        > 0 & (\sigma \neq \sigma_M),
    \end{cases}
\end{align}
where $\sigma_M$ denotes the permutation corresponding to the MWPM. This implies that $\epsilon(\sigma)$ can be 0 if and only if $\sigma_M(i) = j$. Since the MWPM is unique, focusing on the lowest-degree term, we correctly identify the MWPM edges according to \eqref{Eq:setofMWPM2}. This confirms that our algorithm yields the same MWPM as the algorithm in Ref.~\cite{takada2025doubly} as long as $w_\mathrm{th}>2w^\ast$.

Hence, if $w_{\mathrm{th}} > 2w^*$, our algorithm outputs a unique and correct MWPM with high probability.
\end{proof}

We also show that if the overflow occurs, our algorithm can naturally detect it, as shown below.

\begin{theorem}[\label{thm:failure}Failure Detection]
    Let $w^*$ denote the weight of the MWPM for the input graph $\overline{G} = (\overline{V}, \overline{E})$ with edge weights $\tilde{w}: \overline{E} \to \mathbb{Z}$. If
    \begin{align}
        w^* \ge \frac{w_{\mathrm{th}}}{2},
    \end{align}
    the algorithm outputs a failure indicator.
\end{theorem}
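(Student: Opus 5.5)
Following the proof of Theorem~\ref{thm:correctness}, we expand $\det(B)$ over permutations in $\mathbb{F}_2[X]/(X^{w_{\mathrm{th}}})$ and track which monomials survive the truncation. The goal is to show that if $2w^*\ge w_{\mathrm{th}}$, then $\det(B)=0$ in the truncated ring. With the convention that the zero polynomial has $\deg_{\min}$ read as $0$ (no nonzero exponent exists, so the computed $w^*$ defaults to $0$), Step~3 then returns $w^*=0$, and the algorithm outputs the failure indicator.

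The first step is to reuse the cancellation argument from the proof of Theorem~\ref{thm:correctness}. Each $\sigma\in S_{>2}$ pairs with $\sigma'$, obtained by reversing one cycle of length greater than $2$ (fixing a canonical choice, e.g.\ the cycle containing the smallest vertex that lies in a long cycle, so the pairing is an involution). Since $B$ is symmetric, the paired products are equal, and over $\mathbb{F}_2$ they cancel. This cancellation is an identity among polynomials before truncation, so it persists after reduction modulo $X^{w_{\mathrm{th}}}$. Hence
\begin{align}
\det(B)=\sum_{\sigma\in S_2} X^{\sum_i \tilde{w}_{i,\sigma(i)}} \bmod X^{w_{\mathrm{th}}}.
\end{align}

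The second step is to bound the exponents. Every $\sigma\in S_2$ corresponds to a perfect matching $M_\sigma$, and each of its edges appears twice in the product. The exponent is therefore $2\sum_{e\in M_\sigma}\tilde{w}(e)\ge 2w^*$, by minimality of $w^*$. Under the hypothesis $w^*\ge w_{\mathrm{th}}/2$, every exponent is at least $w_{\mathrm{th}}$, so every monomial is zero in $\mathbb{F}_2[X]/(X^{w_{\mathrm{th}}})$. This gives $\det(B)=0$. The same holds if the graph has no perfect matching at all, since then $S_2=\emptyset$. Note that this argument does not require uniqueness of the MWPM, so the failure signal is deterministic rather than merely high-probability.

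The main obstacle is justifying that what the hardware actually computes agrees with $\det(B)$ in the truncated ring. The Samuelson--Berkowitz algorithm is division-free, so every intermediate step is a ring operation in $\mathbb{F}_2[X]/(X^{w_{\mathrm{th}}})$. The quotient map $\mathbb{F}_2[X]\to\mathbb{F}_2[X]/(X^{w_{\mathrm{th}}})$ is a ring homomorphism, and Theorem~\ref{thm:bitlevel} shows that bitwise XOR and shift implement exactly these operations. Therefore the computed value equals the image of the true determinant, namely the zero element, regardless of how the intermediate terms overflow.

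The remaining point is the degenerate case: a genuine $w^*=0$ with no overflow would also trigger the indicator. This cannot arise here, because the weights satisfy $\tilde{w}(e)\ge \tilde{C}w(e)+1\ge 1$. Thus every edge has positive weight, and a true minimum degree of $0$ is impossible whenever $\overline{V}\neq\emptyset$. The output $w^*=0$ therefore signals only the overflow case, which completes the argument.
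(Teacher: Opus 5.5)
Your proof is correct and follows the paper's argument. Both reuse the permutation expansion from the proof of Theorem~\ref{thm:correctness}, in which long cycles cancel in pairs over $\mathbb{F}_2$ and each surviving $\sigma\in S_2$ contributes $X^{2w^*+\epsilon(\sigma)}$ with $\epsilon(\sigma)\ge 0$. Hence $\det(B)=0$ in $\mathbb{F}_2[X]/(X^{w_{\mathrm{th}}})$, and \eqref{Eq:minweight2} returns $0$ under the implicit convention $\deg_{\min}(0)=0$. Your remarks on the division-free Berkowitz algorithm and the quotient homomorphism just make explicit what the paper takes for granted.

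Your last paragraph, on the converse direction, is not needed for the stated implication, and its claim that a computed $w^*=0$ signals only overflow is overstated. Without isolation, an even number of minimum-weight perfect matchings cancel over $\mathbb{F}_2$. For example, a $4$-cycle with equal weights gives $\det(B)=X^{2w^*}+X^{2w^*}=0$, which triggers the failure indicator even though $2w^*<w_{\mathrm{th}}$.
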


\begin{proof}
From Theorem~\ref{thm:correctness}, we have
\begin{align}
    \det(B) &= X^{2w^*} \sum_{\sigma \in S_2} X^{\epsilon(\sigma)}, \\
    M^B_{i,j} \cdot X^{\tilde{w}(\{i,j\})} &= X^{2w^*} \sum_{\sigma \in S_2:\, \sigma(i) = j} X^{\epsilon(\sigma)}.
\end{align}
Since $2w^* \ge w_{\mathrm{th}}$ and $\epsilon(\sigma) \ge 0$, all terms vanish, causing both expressions to become zero. In this situation, \eqref{Eq:minweight2} evaluates to zero, the algorithm correctly outputs a failure indicator, signaling that the MWPM weight exceeds the representable range.
\end{proof}

\subsection{Numerical estimation of required parameters}\label{Sec:EvaluationTheoretically}
Here, we numerically evaluate the proposed MWPM decoder from the perspective of scaling accuracy.
In particular, since the edge weights are discretized as in \eqref{Eq:discretized}, we investigate how much scaling precision is sufficient to ensure that the decoded matching remains identical to the one obtained using floating-point weights.
Although our decoding problem is formulated in terms of integer-weighted path graphs as in~\eqref{Eq:discretized}, our numerical estimates aim to determine the arithmetic bit length required for these integer weights to achieve error-suppression performance comparable to that of conventional MWPM decoding methods based on floating-point weights, such as the ones in Refs.~\cite{wu2023fusion,Higgott2025sparseblossom}.

The numerical simulation was conducted for the $[[n,\,k=1,\,d=\sqrt{n}]]$ two-dimensional rotated surface codes~\cite{bombin2007optimal} under the independent and identically distributed (IID) circuit-level depolarizing error model with a physical error rate of $p = 10^{-3}$.
After preparation of $\ket{0}$, a bit flip may occur with probability $p$.
After a single-qubit gate, Pauli errors $X$, $Y\propto XZ$, and $Z$ may occur with probability $p/3$, respectively, where identity gates also suffer from errors.
After a two-qubit $\textsc{CNOT}$ gate, $X\otimes I,I\otimes X,X\otimes X,\ldots,Z\otimes Z$ may occur with probability $p/15$, respectively, where $I$ is the identity matrix.
Before a measurement in the $Z$ basis, a bit flip may occur with probability $p$.
We simulated memory experiments consisting of $d$ rounds of syndrome extraction, starting from a codeword of logical $\ket{0}$ and ending with measurements in the $Z$ basis. 

Under this setting, we estimate the scale of the discretized weights in \eqref{Eq:discretized} required to reproduce the same MWPM as that obtained with floating-point weights. 
In the simulation, we used Stim~\cite{gidney2021stim} to generate the syndrome extraction circuits. The detector error model obtained from Stim was converted into a graph using PyMatching~\cite{Higgott2025sparseblossom}. To prepare a version of the graph with integer edge weights, we further converted it into a NetworkX~\cite{hagberg2008exploring} graph. For the detector graph on NetworkX, we multiplied an integer scaling factor $C$ to the edge weights, where $C$ is defined as the smallest integer such that the minimum weight obtained from \eqref{Eq:discretized} becomes a $b$-digit binary number. This ensures that all edge weights have at least $b$ significant binary digits. The resulting integer-weighted detector graph was then converted back into a PyMatching graph, producing both floating-point and integer-weight versions of the same graph.
We sampled errors using Stim and performed MWPM decoding with PyMatching for both the floating-point and integer-weight graphs to check whether they produced identical matchings. This process was repeated $10^7$ times for each code distance $d \in \{3, 5, \dots, 29\}$ and for each binary precision $b \in \{5, 6, \dots, 11\}$ to determine the minimum number of binary digits required to reproduce the same MWPM as that obtained from the floating-point graph.

\begin{figure}[tbp]
  \centering
  \includegraphics[width=0.96\linewidth]{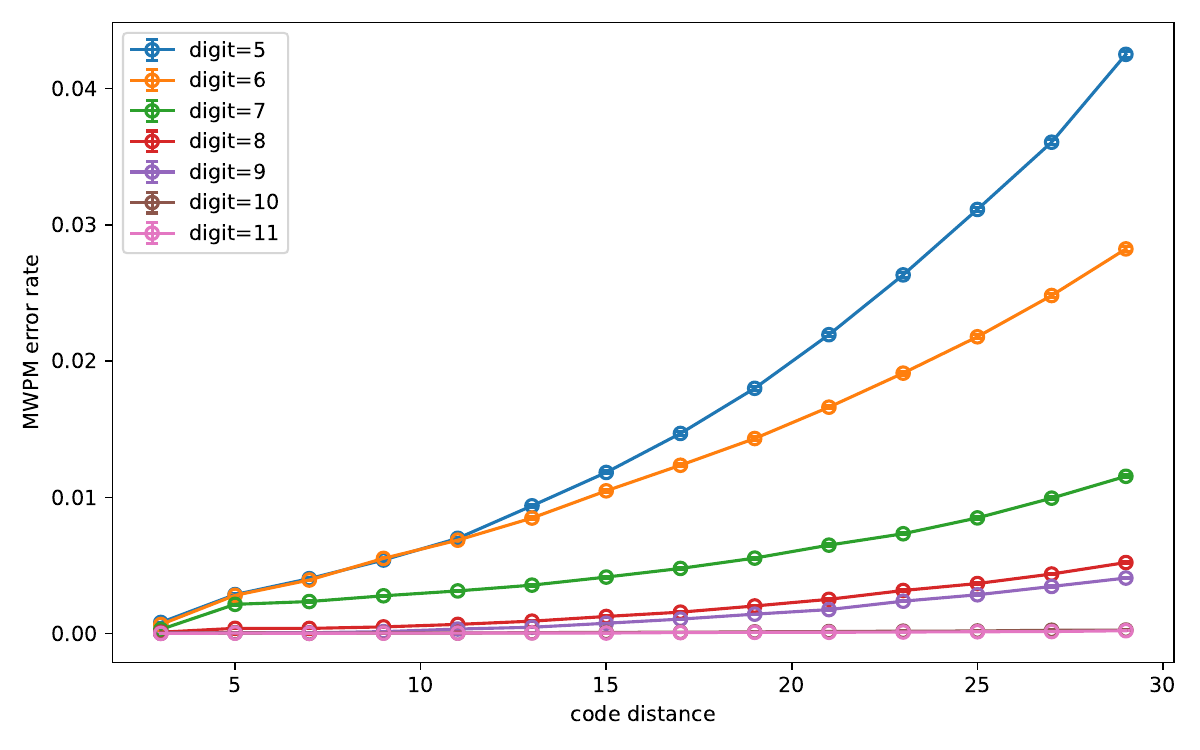}  \captionsetup{justification=raggedright,singlelinecheck=false}
  \caption{
    MWPM error rate as a function of the code distance for different discretization precisions of integerized edge weights. Each curve corresponds to a different number that represents a binary digit of the minimum discretized edge weight of a detector graph. The error rate indicates the fraction of trials where the MWPM solution obtained from the integer-weight graph differs from that obtained using the floating-point weights. Error bars indicate the standard error computed over the $10^7$ trials.
  }
  \label{fig:discretization_accuracy}
\end{figure}

In Fig.~\ref{fig:discretization_accuracy}, we plot the fraction of mismatched MWPMs between the floating-point and discretized-weight graphs as a function of the code distance for various discretization precisions. As the binary precision increases, the mismatch rate decreases rapidly, and for $b \geq 10$, the mismatch rate becomes zero, up to the statistical error below the logical error rate across small code distances $d = 5, 7, 9$. This result demonstrates that a precision of approximately 10 binary digits is sufficient to reproduce the same MWPM as the floating-point version under the considered noise model and parameter regime for small code distances. For sufficiently large code distances, the mismatch rate can be reduced below the logical error rate by increasing the binary precision.

The trend observed in Fig.~\ref{fig:discretization_accuracy} can be interpreted as follows. The MWPM algorithm depends only on the relative ordering of edge weights, not their absolute values. Therefore, discretization can change the MWPM result only when two or more competing matching paths have total weights that are very close. As the discretization precision increases, the relative ordering of these weights becomes stable, leading to the disappearance of mismatches.
Overall, these results indicate that MWPM decoding outcomes with floating-point weights can be faithfully reproduced by using only integer-weight graphs with a feasibly small binary precision for the scaled edge weights. 

\section{Heuristic modifications for practical bit-length reduction}\label{Sec:Heuristic_modification}

In this section, we propose heuristic modifications to our MWPM-decoding algorithm presented in Sec.~\ref{Sec:Theoretically_guaranteeed} and numerically evaluate the practical feasibility of its implementation.
In Sec.~\ref{Sec:HeuristicBitReduction}, we introduce heuristic modifications aimed at reducing the required arithmetic bit length, including a modified isolation strategy and a variable-precision scheme. Their performance is numerically evaluated in Sec.~\ref{Sec:EvaluationHeuristically}. In Sec.~\ref{Sec:ParameterChoice}, we propose practical parameter choices for hardware implementation. 

\subsection{Description of the heuristic improvements on the overflow-safe polylog-time MWPM decoding}\label{Sec:HeuristicBitReduction}

In this section, we present two strategies for reducing the required arithmetic bit length. In Sec.~\ref{Sec:IsolationBitReduction}, we remove the dominant source of bit-length blowup by modifying isolation. In Sec.~\ref{Sec:ScalingBitReduction}, we further introduce the variable-precision scheme that efficiently verifies the MWPM.

\subsubsection{Modification of isolation}\label{Sec:IsolationBitReduction}
The randomized algorithm described in Sec.~\ref{Sec:CorrectAlgorithm} isolates the MWPM by scaling every edge weight according to~\eqref{Eq:tildew}, where the amplification factor $\tilde{C}$ in~\eqref{Eq:tildeC} guarantees that, once the perturbed MWPM becomes unique, it coincides with the MWPM under the original weights $w : \overline{E} \to \mathbb{Z}$.
Although this construction ensures correctness with high probability, the large magnitude of $\tilde{C}$ substantially increases the bit length required to compute the determinant of matrix $B$ in \eqref{Eq:Tuttemat2}, creating a substantial bottleneck for practical implementations.

Here, we propose an alternative perturbation strategy by omitting the multiplication of the scaling factor $\tilde{C}$, which considerably reduces the required bit length. In particular, the algorithm proceeds identically to the procedure in Sec.~\ref{Sec:CorrectAlgorithm}, except for the perturbation rule and the selection of the final output.
We describe the modified step below.

\smallskip
\textit{Modified step.}
Instead of constructing the amplified weights $\tilde{w}$ in~\eqref{Eq:tildew}, we define perturbed weights without amplification by
\begin{align}
    \hat{w}(e) \coloneqq w(e) + W(e), \label{Eq:hatw}
\end{align}
where $W(e)$ is drawn in the same manner as in Sec.~\ref{Sec:CorrectAlgorithm}. All subsequent steps---construction of the matrix $B$ in \ref{Eq:Tuttemat2}, computation of $\det(B)$, extraction of minimum-degree terms, and evaluation of minors to obtain the set representing the unique 
MWPM---are performed exactly as in Sec.~\ref{Sec:CorrectAlgorithm}, with the replacement of $\tilde{w}(e)$ in~\eqref{Eq:tildew} with $\hat{w}(e)$ in~\eqref{Eq:hatw}
throughout.
In addition to this, for each matching obtained from $\hat{w}(e)$, the true weight of the matching with respect to the original, unperturbed weight function $w$ is computed.
When the perturbation procedure is repeated multiple times, we select as the final output the set $M$ corresponding to the perturbation whose resulting matching attains the minimum weight with respect to 
the original weights $w$.

\subsubsection{Modification of accuracy scaling}\label{Sec:ScalingBitReduction}
As described in Sec.~\ref{Sec:IsolationBitReduction}, the heuristic method outputs an estimate of the optimal solution by recomputing the weight of each MWPM candidate using the original weights.
In this process, graphs of the same precision are used both to generate MWPM candidates and to evaluate the final weights. As a result, while the probability of obtaining candidates that include the correct MWPM increases, the bit length required for determinant computations increases to maintain high precision, which may still make the implementation impractical.

Here, we further extend this method by allowing the discretization precisions in~\eqref{Eq:discretized} to vary and adapting them at each step.
The algorithm proceeds identically to the procedure in Sec.~\ref{Sec:IsolationBitReduction}, except for the weight functions used in the computation. We describe the modified step below.

\smallskip
\paragraph*{Modified step.}
Let $\overline{G}=\qty(\overline{V}, \overline{E})$ be the path graph. First, we define the discretized weight as
\begin{equation}
    w_C(e) \coloneqq \left\lceil -C \ln\!\left(\sum_j p_j\right) \right\rceil,
\end{equation}
by explicitly introducing the scaling factor $C$, where the notation follows that of \eqref{Eq:discretized}. We introduce two scaling factors that satisfy $C_{\mathrm{high}} > C_{\mathrm{low}}$ and prepare the corresponding weight functions
\begin{gather}
    w_{C_{\mathrm{high}}} : \overline{E} \rightarrow \mathbb{Z}, \label{Eq:w_high}\\
    w_{C_{\mathrm{low}}}  : \overline{E}  \rightarrow \mathbb{Z} \label{Eq:w_low}.
\end{gather}
With these definitions, the proposed method can be described as the following replacement of weight functions in the existing algorithm. In the step of generating MWPM candidates, namely in the computation of the MWPM in Sec.~\ref{Sec:CorrectAlgorithm}, the weight function $w_{C_{\mathrm{low}}}$ is used. On the other hand, in the step of selecting the final output, that is, in the additional weight evaluation introduced in Sec.~\ref{Sec:IsolationBitReduction}, the weight function $w_{C_{\mathrm{high}}}$ is used.

\subsection{Numerical evaluation of the heuristic decoder
}\label{Sec:EvaluationHeuristically}
Here, we evaluate the proposed improvements on our MWPM decoder from the following perspectives.
\begin{enumerate}
    \item \textbf{Required bit length:}
    In Sec.~\ref{Sec:CorrectAlgorithm} and Sec.~\ref{Sec:HeuristicBitReduction}, we presented two approaches: a theoretically correct formulation and its heuristic, yet practically optimized variant with reduced bit length.
    In Sec.~\ref{Sec:EvaluationRequiredBitLength}, we estimate the required arithmetic bit length $w_{\mathrm{th}}$ to reliably obtain the correct MWPM without affecting the logical error rates. 

    \item \textbf{Correctness of the MWPM:}
    In the method described in Sec.~\ref{Sec:HeuristicBitReduction}, MWPM candidates are obtained via heuristic approximations. In Sec.~\ref{Sec:EvaluationCorrectness}, we numerically evaluate how $w_{\mathrm{th}}$ affects the correctness of the MWPM and how the number of perturbation trials influences the probability of obtaining the correct MWPM\@.
\end{enumerate}
All the numerical simulations were conducted in the same setting as in Sec.~\ref{Sec:EvaluationTheoretically}\@.

\subsubsection{Required bit length}\label{Sec:EvaluationRequiredBitLength}
We estimate the threshold bit length $w_{\mathrm{th}}$ in Sec.~\ref{Sec:CorrectAlgorithm} required to obtain the correct MWPM with high probability.

In this simulation, we used the same procedure as in Sec.~\ref{Sec:EvaluationTheoretically} to generate circuits with Stim, construct the detector graph using PyMatching, and produce its integer-weight version for MWPM decoding using NetworkX\@. For efficiency, we fixed the code distance to $d = 5$. Although larger code distances require a digit length of at least $10$ for sufficient numerical precision, a binary precision of $b = 8$ is sufficient for $d = 5$, and we therefore used $b = 8$ as a high binary precision. To evaluate the method in Sec.~\ref{Sec:ScalingBitReduction}, we also adopted a reduced binary precision, or low binary precision, of $b = 4$ in order to further suppress the required bit length to a practically feasible level.
Under this setting, we sampled errors using Stim and performed MWPM decoding with PyMatching on the graphs to obtain the total weight of the resulting MWPM\@. From this MWPM weight, we applied the two scaling procedures corresponding to \eqref{Eq:tildew} and \eqref{Eq:hatw}, respectively, thereby generating the two types of scaled weights, $\tilde{w}_{\mathrm{total}}$ and $\hat{w}_{\mathrm{total}}$.
In summary, we compare three cases: the original scaling $\tilde{w}$, the proposed scaling $\hat{w}$ with high binary precision, and the same scaling $\hat{w}$ with low binary precision.

To ensure that the perturbed weights lead to a correct MWPM, we adopted the perturbation range 
\begin{align}
    1 \le W(e) \le W_{\max} \qty(= \left\lceil 0.8\, n^{0.8} \right\rceil),
\end{align}
where $n$ denotes the size of a path graph. We chose a looser bound than that in Ref.~\cite{takada2025doubly} to ensure a sufficient margin. The effect of the perturbation was incorporated by using the maximum value over this allowed range. We then computed $2\tilde{w}_{\mathrm{total}} + 1$ (or the corresponding quantity for $\hat{w}_{\mathrm{total}}$) for each trial. From Theorem~\ref{thm:correctness}, this value provides a lower bound on the required threshold bit length $w_{\mathrm{th}}$. This process was repeated $10^7$ times to determine and compare the lower bounds of the required threshold bit length $w_{\mathrm{th}}$ for the two algorithms described in Secs.~\ref{Sec:CorrectAlgorithm} and~\ref{Sec:HeuristicBitReduction}.

\begin{figure}[tbp]
  \centering
  \includegraphics[width=0.96\linewidth]{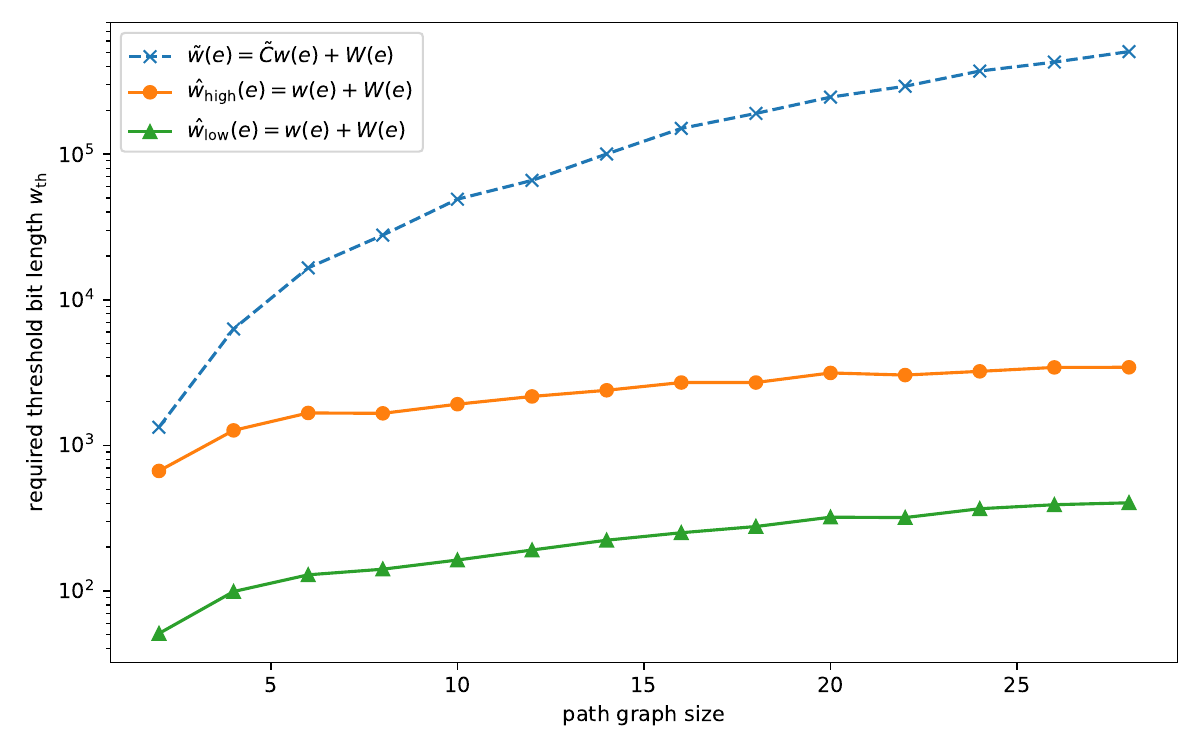}  \captionsetup{justification=raggedright,singlelinecheck=false}
  \caption{
    Required $w_{\mathrm{th}}$ to obtain the correct MWPM for each number $\{2,4,\dots,28\}$ with a code distance of 5, and a high binary precision of 8 bits and a low binary precision of 4 bits. The three curves represent the lower bounds of $w_\mathrm{th}$ obtained from MWPM weights using the respective edge-weight scalings, where the plotted values correspond to the maximum $w_\mathrm{th}$ observed over $10^7$ trials
  }
  \label{fig:path_graph_size_vs_weight}
\end{figure}

In Fig.~\ref{fig:path_graph_size_vs_weight}, we plot lower bounds of the required threshold bit length $w_{\mathrm{th}}$, obtained from the maximum $w_{\mathrm{th}}$ that we found within $10^7$ trials for each path graph size. Note that, by construction, the path graph size is restricted to even integers, as illustrated in Ref.~\cite{takada2025doubly}. As the path graph size increases, the gap between the curves corresponding to $\tilde{w}$ and $\hat{w}$ becomes larger. For instance, at a path graph size of $n = 28$, the scaling method based on \eqref{Eq:tildew} requires a threshold bit length of about $6 \times 10^5$, whereas the method described in \eqref{Eq:hatw} requires only about $4 \times 10^{3}$ when using high binary precision. Furthermore, using the low binary precision $b = 4$, we achieve further reduction in the required threshold bit length, down to approximately $5 \times 10^2$,
corresponding to a reduction of more than $99.9\%$ relative to $6\times 10^5$.

These results indicate that the method described in Sec.~\ref{Sec:CorrectAlgorithm} requires a bit length that may still require impractically large for FPGA-based implementations, whereas the bit-reduction techniques introduced in Secs.~\ref{Sec:IsolationBitReduction} and~\ref{Sec:ScalingBitReduction} bring the required threshold bit length within a feasible range for FPGA hardware that supports $512$-bit computing. 

\subsubsection{Correctness of the MWPM} \label{Sec:EvaluationCorrectness}
We estimate the probability that the algorithm presented in Sec.~\ref{Sec:HeuristicBitReduction} does not output the correct MWPM\@.

In this simulation, we used the same procedure as in Sec.~\ref{Sec:EvaluationRequiredBitLength} to generate circuits with Stim, construct the detector graph using PyMatching, and produce its integer-weight version for MWPM decoding using NetworkX, where we fixed the code distance to $d = 5$, chose a high binary precision of $b = 8$, and a low binary precision of $b = 4$. We performed a Monte Carlo simulation to evaluate the probability that our algorithm fails to output an MWPM of the same weight as that of the conventional blossom algorithm~\cite{Higgott2025sparseblossom}.

In this simulation, we first constructed a lookup table from the detector graph using NetworkX\@. We then sampled detection events with Stim and constructed the corresponding path graph by reading the lookup table. The Mersenne Twister~\cite{matsumoto1998mersenne} was used as the pseudo-random function to generate random numbers for perturbation. The seed values are chosen uniformly at random from the set of $32$-bit unsigned integers. For each perturbation set, we constructed the weight-perturbed path graph following Sec.~\ref{Sec:IsolationBitReduction} and computed the determinant. For this computation, we used an implementation based on Ref.~\cite{takada2025doubly}, modified to retain only the first $w_{\mathrm{th}}$ digits during arithmetic to evaluate the success probability as a function of $w_{\mathrm{th}}$. For every perturbation set, we checked whether the value $w^*$ obtained from \eqref{Eq:minweight2} coincides with the weight computed from \eqref{Eq:setofMWPM2}; those satisfying this condition are regarded as MWPM candidates. Following the methods described in Sec.~\ref{Sec:HeuristicBitReduction}, we evaluated the correctness of the algorithm using the following two methods.

\paragraph{Base modified method: isolation-based bit reduction.}
For each MWPM candidate that is obtained from the high-binary-precision path graph, its weight on the original high-binary-precision path graph is recomputed, and the one with the minimum weight is selected.

\paragraph{Extended modified method: low-precision candidate generation with high-precision verification.}
To further reduce the required bit length, we additionally consider a variant in which MWPM candidates are generated using the low-binary-precision path graph. For each MWPM candidate obtained in this way, its weight on the original high-binary-precision path graph is recomputed, and the one with the minimum weight is selected.

We then verified whether its weight coincides with the MWPM weight obtained by PyMatching, i.e., whether it corresponds to the correct MWPM on the integer-scaled path graph.

These entire procedures were repeated $10^6$ times for each chosen number of perturbation samples, allowing us to estimate how many perturbation samples are sufficient for our bit-reduction methods. Since the failure probability only needs to be kept below the logical error rate, we decoded path graphs of size at most $28$. Following Ref.~\cite{takada2025doubly}, we adopted $W_{\max}$, which denotes the upper bound of the perturbation range, as the number of perturbation sets, which also indicates the number of weight-perturbed path graphs. As in Sec.~\ref{Sec:EvaluationRequiredBitLength}, we used $\lceil 0.8\, n^{0.8} \rceil$ to ensure that the correct MWPM can be obtained. Using this value as the base perturbation, we conducted additional experiments by doubling the number of perturbation sets while keeping the perturbation range fixed. Note that, throughout our experiments, we did not directly check whether the MWPM is uniquely isolated or not, since the isolation is only a sufficient condition, not a necessary condition, for finding the correct MWPM\@.

\begin{figure*}[tbp]
  \centering
  \begin{subfigure}{3.4in}
    \centering
    \includegraphics[width=3.4in]{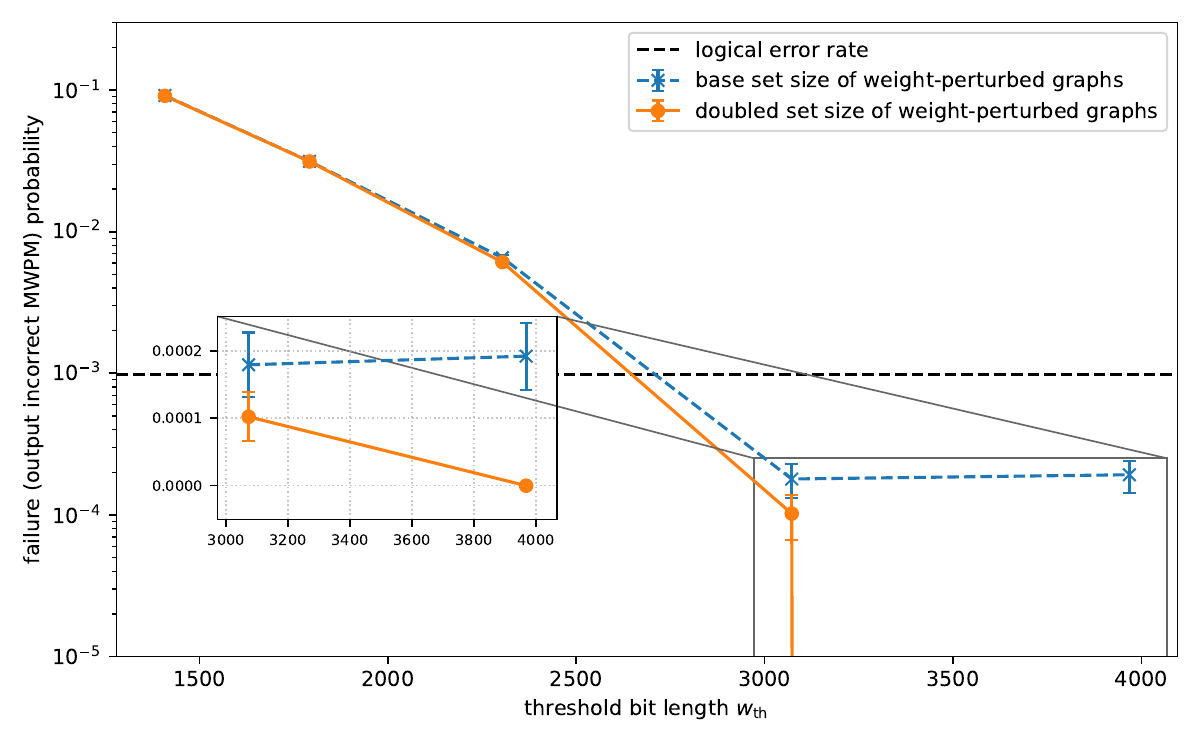}
    \caption{Base modification}
    \label{fig:threshold_vs_failure_rate_a}
  \end{subfigure}
  \hfill
  \begin{subfigure}{3.4in}
    \centering
    \includegraphics[width=3.4in]{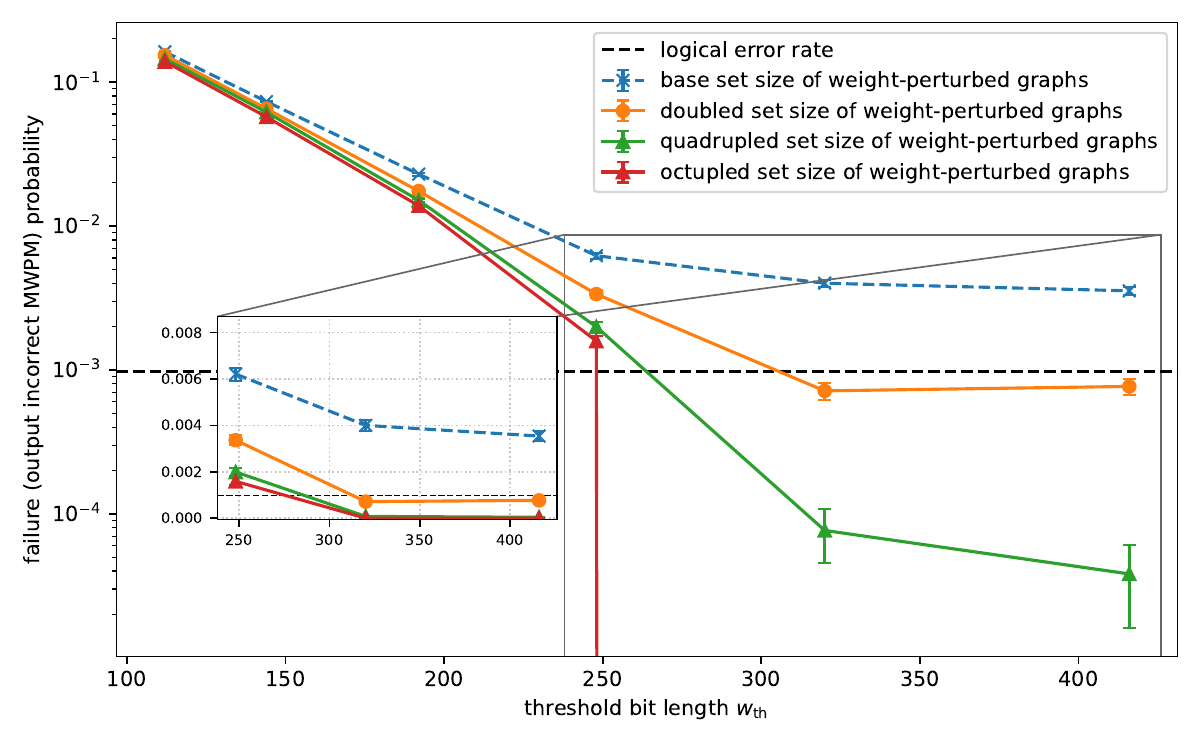}
    \caption{Extended modification}
    \label{fig:threshold_vs_failure_rate_b}
  \end{subfigure}
  \captionsetup{justification=raggedright,singlelinecheck=false}
  \caption{    
    Failure probability as a function of the threshold bit length $w_{\mathrm{th}}$ for a code distance of 5, a high binary precision of 8 bits, and a low binary precision of 4 bits. The failure probability represents the fraction of trials, among those with path graph size at most 28, in which the decoding process fails, or the decoded MWPM of the perturbed path graph does not coincide with any of the correct MWPMs of the original path graph. Error bars indicate the standard error computed over $10^6$ trials. 
    (a) Isolation-based bit-reduction method using high-binary-precision MWPM candidate generation. 
    (b) Extended method using low-binary-precision MWPM candidate generation with high-binary-precision verification. 
    For each panel, the dashed horizontal line indicates the logical error rate. The blue curve denotes the baseline number of perturbation samples required by the conventional method, while orange, green, and red denote 2×, 4×, and 8× that number, respectively.
    Our method sufficiently suppresses the decoding failure probability with arithmetic bit lengths of $3\times 10^2$–$5\times 10^2$, as indicated by the green curve.
    }
  \label{fig:threshold_vs_failure_rate}
\end{figure*}

In Fig.~\ref{fig:threshold_vs_failure_rate}, we plot the failure probability for each threshold bit length $w_{\mathrm{th}}$. Figure~\ref{fig:threshold_vs_failure_rate_a} shows the results for the base isolation-based bit-reduction method, while Fig.~\ref{fig:threshold_vs_failure_rate_b} shows the results for the extended method using low-precision MWPM candidate generation. In both figures, the gap between the curves is zero up to the error bar for small values of $w_{\mathrm{th}}$.
As $w_{\mathrm{th}}$ increases, the gap gradually becomes more pronounced.
In Fig.~\ref{fig:threshold_vs_failure_rate_a}, we observe that the failure probability for the base value remains almost constant below the logical error rate in the range $3000 \lesssim w_{\mathrm{th}} \lesssim 4000$, while for the doubled value, the failure probability drops to zero at $w_{\mathrm{th}} \approx 4000$. In Fig.~\ref{fig:threshold_vs_failure_rate_b}, we observe a trade-off between the failure probability and the set size: at the doubled set size, the failure probability remains below the logical error rate, and when the set size is octupled, it drops to zero.

Based on the results presented in Sec.~\ref{Sec:EvaluationRequiredBitLength} and the numerical simulation above, the base isolation-based bit-reduction method requires a threshold bit length of $w_{\mathrm{th}} \approx 4 \times 10^3$, whereas the original method requires $w_{\mathrm{th}} \approx 6 \times 10^5$.
This corresponds to an estimated improvement in bit length of about $99.3 \%$ compared with the previous work~\cite{takada2025doubly}.
By additionally using the extended method with low-precision MWPM candidate generation, the required threshold bit length is further reduced to approximately $3\times 10^2$–$5\times 10^2$. This range is consistent with the numerics in Sec.~\ref {Sec:EvaluationRequiredBitLength}, and the numerics here confirm that it keeps the failure probability below the logical error rate.
As in Sec.~\ref {Sec:EvaluationRequiredBitLength}, this corresponds to an improvement of about $99.9\%$ compared with the original method in Ref.~\cite{takada2025doubly}.

\subsection{Proposal of parameter choice for implementation}\label{Sec:ParameterChoice}
Based on the theoretical analysis and numerical simulations presented in the previous sections, we summarize practical guidelines for parameter selection when implementing the proposed decoder on hardware, with particular emphasis on proof-of-principle demonstrations at code distance $d=5$:

\begin{itemize}
    \item arithmetic bit length: $w_{\mathrm{th}} \ge 5 \times 10^2$ bits;
    \item low binary precision: $4$ bits;
    \item high binary precision: $\ge 8$ bits;
    \item range of perturbations: $\ge \left\lceil 0.8\, n^{0.8} \right\rceil$;
    \item the number of perturbation sets: $\ge 8W_{\max}$.
\end{itemize}

\section{Discussion}\label{Sec:Discussion}
In this work, we address fundamental implementability challenges associated with the determinant-based algorithm for solving MWPM in polylog parallel time.
We have established a framework that guarantees the mathematical correctness of this approach under finite bit-length arithmetic.
Furthermore, we have mitigated the large bit-length requirements of the previous determinant-based algorithm in Ref.~\cite{takada2025doubly} by identifying and removing computational bottlenecks, thereby substantially reducing the arithmetic bit length required for its implementation.
These results close a practical gap between asymptotically fast MWPM decoding theory and its finite-precision hardware implementation.

Conventionally, MWPM decoders are regarded as having polynomial-time complexity, and achieving polylog runtime for MWPM decoding has been challenging even with parallelization. By contrast, the polylog-time parallel MWPM decoder proposed in Ref.\cite{takada2025doubly} offers a route toward doubly-polylog-time-overhead FTQC, substantially reducing the time overhead of FTQC\@. However, the approach in Ref.~\cite{takada2025doubly} has been hard to implement in practice because of an extremely large arithmetic bit length.
Our contribution is to provide concrete evidence that this asymptotically fast MWPM-decoding approach can be implemented at an experimental scale within realistic hardware constraints.
Analogous to recent surface-code experiments that validated exponential error suppression at increasing code distances~\cite{google2023suppressing,google2025quantum}, our results enable proof-of-principle demonstrations in the early-FTQC regime that can experimentally validate the possibility of asymptotically sublinear runtime scaling of the MWPM decoding.

Beyond the near-term regime, further optimization of the determinant-based MWPM-decoding methods and a thorough comparison with those based on variants of the blossom algorithm remain interesting directions for future work.
Indeed, variants of the blossom algorithm, such as the sparse blossom~\cite{Higgott2025sparseblossom} and the fusion blossom~\cite{wu2023fusion}, explicitly leverage the problem structure of surface-code decoding to reduce the computational complexity to approximately $O(n^{1.3})$, compared with naively invoking a general-purpose blossom-algorithm-based MWPM solver.
Similarly, the use of the problem structure may also reduce the computational complexity of the determinant-based approach.
Given the polylog runtime scaling of the determinant-based approach, which is asymptotically faster than blossom-based methods, it would be interesting to estimate the crossover point at which it becomes faster in practice, accounting for such algorithmic optimizations.

\begin{acknowledgements}
The authors acknowledge Teruo Tanimoto for discussion.
This work was supported by JST PRESTO Grant Number JPMJPR201A, JPMJPR23FC, JSPS KAKENHI Grant Number JP23K19970, JST CREST Grant Number JPMJCR25I5, and Faculty Research Funding from Google Quantum AI\@. 
\end{acknowledgements}

\clearpage
\appendix
\section{A parallel algorithm for finding an MWPM via conventional integer arithmetic}\label{appendix:A}
For completeness, we summarize the determinant-based polylog-time parallel algorithm in Ref.~\cite{takada2025doubly}.
In particular, given a path graph $\overline{G}=\qty(\overline{V}, \overline{E})$ with edge weights $w: \overline{E} \xrightarrow{} \mathbb{Z}$, we describe an algorithm for finding an MWPM in $\overline{G}$. 

\begin{enumerate}
  \item Adjust each edge weight and replace $w$ by $\tilde{w}$ for each $e \in \overline{E}$ as
  \begin{align}
      \tilde{w}(e) \coloneqq \tilde{C}w(e) + W(e), 
  \end{align}
  where we take the factor $\tilde{C}$ and the perturbation function $W$ as
  \begin{align}
      &\tilde{C} \coloneqq \frac{|\overline{V}|}{2}(W_{\max} - 1) + 1, \\
      &W: \overline{E} \xrightarrow{} \mathbb{N}, \\
      &1\leq W(e) \leq W_{\max}, \\
      &W_{\max} \coloneqq \max_{e \in \overline{E}}\{W(e)\}.
  \end{align}
  With this perturbation, with high probability, one of the MWPMs in $\overline{G}$ becomes a unique MWPM in the resulting weighted graph\@.
  \item Use a Tutte matrix $A$ with its element $A_{i, j}$ given by 
  \begin{align}
      A_{i, j} = 
      \begin{cases}
          x_{i, j} & i < j,\ \{i, j\} \in \overline{E}, \\
          -x_{i, j} & i > j,\ \{i, j\} \in \overline{E}, \\
          0 & \text{otherwise},
      \end{cases}
  \end{align}
  where $x_{i, j}$ is a variable. Then, for each edge $\{i, j\} \in \overline{E}$, we assign 
  \begin{align}
      x_{i, j} = 2^{\tilde{w}(\{i, j\})}, 
  \end{align}
  and obtain a matrix $B$ defined as
  \begin{align}
      B_{i, j} = 
      \begin{cases}
          2^{\tilde{w}(\{i, j\})} &i < j,\ \{i, j\} \in \overline{E}, \\
          -2^{\tilde{w}(\{i, j\})} &i > j,\ \{i, j\} \in \overline{E}, \\
          0 & \text{otherwise}, \\
      \end{cases}
      \label{Eq:Tuttemat1}
  \end{align}
  \item Compute $\det(B)$ and obtain
  \begin{align}
      w^* \coloneqq \max\left\{w \in \mathbb{Z}\middle| \frac{\det(B)}{2^{2w}} \in \mathbb{N}\right\}. \label{Eq:minweight1}
  \end{align}
  Note that the determinant can be computed in polylog parallel runtime by the Samuelson-Berkowitz algorithm~\cite{berkowitz1984computing} (see also Sec.~S2 of Ref.~\cite{takada2025doubly} for the details of this method). 
  \item For each edge, in parallel, calculate a minor $M^B_{i, j} = \det(B^{(i, j)})$, where $B^{(i, j)}$ is defined as an $\qty(|\overline{V}| - 1) \times \qty(|\overline{V}| - 1)$ matrix obtained from $B$ by removing the $i$th row and the $j$th column. Finally, we obtain
  \begin{align}
      M = \left\{\{i, j\}\middle|
      \frac{M^B_{i, j} \cdot 2^{\tilde{w}(\{i, j\})}}{2^{2w^*}} \text{ is odd}\right\} \subseteq \overline{E}. \label{Eq:setofMWPM1}
  \end{align}
\end{enumerate}

This subset $M$ represents the unique MWPM of $\overline{G}$.

\bibliography{citation} 

\end{document}